\documentclass[11pt,a4paper]{article}
\usepackage[margin=1in]{geometry}
\usepackage[T1]{fontenc}
\usepackage{lmodern}
\usepackage{microtype}
\usepackage{mathtools,amssymb,amsthm}
\usepackage[numbers,sort&compress]{natbib}
\usepackage{tikz}
\usetikzlibrary{arrows.meta,positioning,calc,decorations.pathreplacing}
\usepackage[hidelinks]{hyperref}

\theoremstyle{plain}
\newtheorem{theorem}{Theorem}[section]
\newtheorem{lemma}[theorem]{Lemma}

\theoremstyle{definition}
\newtheorem{definition}[theorem]{Definition}
\newtheorem{remark}[theorem]{Remark}

\newcommand{\ppE}{\mathbb E}
\newcommand{\ppR}{\mathbb R}

\newcommand{\ppFill}[1]{\mathsf A_{#1}}
\newcommand{\ppSwap}[1]{\mathsf T_{#1}}
\newcommand{\ppId}{\mathrm I}

\title{Asymptotically Optimal Public Project Redistribution without Bounded Precision: A Complex-Analytic Proof}
\author{Mingyu Guo\\[0.4em]
  \small School of Computer Science and Information Technology\\
  \small Adelaide University\\
  \small \href{mailto:mingyu.guo@adelaide.edu.au}{\texttt{mingyu.guo@adelaide.edu.au}}}
\date{}
\hypersetup{
  pdftitle={Asymptotically Optimal Public Project Redistribution without Bounded Precision: A Complex-Analytic Proof},
  pdfauthor={Mingyu Guo}}

\begin{document}
\maketitle

\begin{abstract}
We study worst-case VCG redistribution mechanism design for the
public project problem, where $n$ agents decide whether to build a
non-excludable public project and the designer chooses a Groves term
that returns as much of the VCG payment as possible without running
a deficit. The objective is the worst-case efficiency ratio, the
worst-case ratio between the agents' total utility and the first-best
total utility. Prior work showed that this ratio can approach $1$ as
$n$ grows only under a bounded precision assumption, that all types
are rational numbers with a common bounded denominator. The
assumption is stronger than it looks: it confines the difficult
profiles to those in which all but a bounded number of agents report
zero, and an agent who reports zero knows the total exactly. Without
it, the best known guarantee tends to $1/2$. We remove the
assumption. We
construct a deterministic, anonymous, strategy-proof, and
non-deficit mechanism for arbitrary real types in $[0,1]$ whose
worst-case efficiency ratio is $1-O(1/\log\log n)$. The construction
estimates the shortfall of the reported total below the project cost
by filling in each agent's missing report with a resampled one, then
estimates the error of that estimate, then the error of the error,
and so on. After $m$ rounds the aggregate error is at most
$1/H_m$ mean reports, where $H_m$ is the $m$th harmonic number. The
proof of this bound is the heart of the paper and, unusually for a
result about payments, uses complex analysis: a trigonometric
representation of the shortfall function together with the
Poisson--Jensen inequality for bounded analytic functions in the
upper half plane.
\end{abstract}

\noindent\textbf{Keywords:} Mechanism design; VCG redistribution; public project; worst-case efficiency ratio.

\section{Introduction}\label{pp:sec:intro}

The public project problem~\cite{pp-moulin1988}
is a fundamental mechanism design model that has been well studied
in both computer science and economics. A group of agents decide
whether or not to build a non-excludable public project, such as a
bridge that can be accessed by everyone. The mechanism decision is
binary, build or not build, and once built the project can be
consumed by every agent regardless of her contribution. In this
setting the Vickrey--Clarke--Groves (VCG) mechanism is the canonical
solution, as it is efficient and dominant-strategy incentive
compatible, but it suffers from a major drawback: budget surplus.
Surplus payments leave the system and are wasted. VCG
redistribution mechanisms were introduced to recover this surplus
by returning as much of the collected VCG payments to the agents as
possible, without affecting incentives and without ever running a
deficit. Our model follows prior work on VCG redistribution
mechanism design for the public project
problem~\cite{pp-naroditskiy2012,pp-guo2016,pp-guoshen2017,pp-guo2019,pp-wang2021wiiat,pp-guo2023lottery}.
The objective, following that literature, is the worst-case
efficiency ratio: the fraction of the first-best welfare that the
mechanism guarantees at every type profile. The question we settle
is whether this ratio can approach $1$ as the number of agents
grows, a question raised by Naroditskiy et
al.~\cite{pp-naroditskiy2012} and open since 2012. We prove that it can, for arbitrary
real types, with a deterministic mechanism. The paper is, in
effect, one construction and one proof; perhaps surprisingly for a
paper about payments, the proof turns on complex analysis.

\subsection{Problem description}\label{pp:sec:model}

\begin{definition}[Public project problem]
There are $n\geq3$ agents who decide whether or not to build a
public project that costs $1$. Agent $i$'s type is
$\theta_i\in[0,1]$. If the decision is to build, agent $i$'s
valuation is $\theta_i$. If the decision is not to build, agent
$i$ retains her share $1/n$ of the project cost, and her valuation
is $1/n$.
\end{definition}

It is without loss of generality to assume that valuations are
bounded above by the project cost~\cite{pp-naroditskiy2012}.
Each agent reports a type in $[0,1]$. For a report profile
$\theta=(\theta_1,\ldots,\theta_n)$ write $s=\sum_i\theta_i$ for
the reported total, $\theta_{-i}$ for the reports of the agents
other than $i$, and $s_{-i}=\sum_{j\ne i}\theta_j$ for their sum.
VCG redistribution mechanisms all share the same allocation rule:
the project is built if and only if the agents' total valuation
reaches the project cost, that is, if and only if $s\geq1$. The
total valuation under this efficient decision is
\[
 S(\theta)=\max\{s,1\}=s+g,
 \qquad\text{where}\quad g=G(s),\quad G(t)=\max\{1-t,0\}.
\]
We call $G$ the \emph{shortfall function} and $g$ the
\emph{shortfall}: the amount by which the reported total falls
short of the project cost, and zero when it does not. The
shortfall plays a central role below. We use this case convention
throughout: an uppercase letter is a
function of a real argument, and the matching lowercase letter is
its value at the total $s$.

VCG redistribution mechanisms first charge the VCG payments, which
ensures strategy-proofness, and then return to every agent a
redistribution payment that is independent of her own type, which
ensures that the redistribution affects neither efficiency nor
strategy-proofness. By Holmstr\"om's
characterization~\cite{pp-holmstrom1979}, for our model these are
the only mechanisms that are efficient and strategy-proof; they are
the Groves mechanisms~\cite{pp-groves1973}. We restrict attention
to anonymous mechanisms, in which all agents are treated by the
same rule; for the worst-case efficiency objective adopted below,
and throughout this literature, this is without loss of
generality~\cite{pp-apt2008,pp-naroditskiy2012}.

\begin{definition}[VCG redistribution mechanisms for the public
project problem~\cite{pp-naroditskiy2012}]
Build the project if and only if $s\geq1$. Agent $i$ is charged
\begin{equation}\label{pp:eq:payments}
 p_i(\theta)=
 \begin{cases}
  h(\theta_{-i})-s_{-i},&\text{if built},\\
  h(\theta_{-i})-(n-1)/n,&\text{if not built},
 \end{cases}
\end{equation}
where $h:[0,1]^{n-1}\to\ppR$ is applied to the other agents'
reports. A positive $p_i$ is a charge and a negative $p_i$ is a
transfer to the agent.
\end{definition}

The function $h$, often called the Groves term, characterises a
specific VCG redistribution mechanism, and the mechanism design
task comes down to designing $h$. Because $h(\theta_{-i})$ does not
depend on agent $i$'s report, her report affects her utility
$v_i-p_i$ only through the decision, and the decision maximises her
true value plus the others' reported values exactly when she
reports truthfully; so truth-telling is a dominant strategy for
every choice of $h$.

There is only one mechanism design constraint. Let
$L(\theta)=\sum_ip_i(\theta)$ be the revenue retained by the
mechanism. Summing~\eqref{pp:eq:payments} at truthful reports gives
\begin{equation}\label{pp:eq:welfare}
 L(\theta)=\sum_ih(\theta_{-i})-(n-1)S(\theta).
\end{equation}
The mechanism is \emph{ex post weakly budget balanced}, or
\emph{non-deficit}, if $L(\theta)\geq0$ at every profile. The agents'
total utility is $W(\theta)=S(\theta)-L(\theta)$: retained revenue
is a welfare loss. The first-best total utility, the maximum total
utility assuming the agents do not lie, is $S(\theta)$, attained
when the efficient decision is made and no money leaves the
system. Following Moulin~\cite{pp-moulin2009}, we measure a
mechanism $M$ by its \emph{worst-case efficiency ratio}, the
worst-case ratio between the achieved total utility and the
first-best total utility:
\[
 \alpha(M)=\inf_{\theta\in[0,1]^n}\frac{W(\theta)}{S(\theta)}
         =1-\sup_{\theta\in[0,1]^n}\frac{L(\theta)}{S(\theta)}.
\]
The design problem is to choose $h$ so as to maximise $\alpha(M)$.
The constraint and the objective can be summarised in one
inequality, as in~\cite{pp-guo2023lottery}:
\begin{equation}\label{pp:eq:one-inequality}
 (n-1)S(\theta)\leq\sum_ih(\theta_{-i})\leq(n-\alpha(M))S(\theta)
 \qquad\text{for all }\theta .
\end{equation}
A weakly budget-balanced mechanism has $\alpha(M)\leq1$, and
$\alpha(M)$ is close to $1$ exactly when $\sum_ih(\theta_{-i})$ is
close to $(n-1)S(\theta)$ at every profile. As in prior work on
this model~\cite{pp-naroditskiy2012,pp-guo2019,pp-guo2023lottery},
non-deficit is the only constraint: we impose neither individual
rationality nor nonnegativity of individual transfers.

\subsection{Main result}\label{pp:sec:mainresult}

Naroditskiy et al.~\cite{pp-naroditskiy2012} conjectured that
their upper bounds on the worst-case efficiency ratio are tight;
since the bounds approach $1$, this would mean that the ratio can
approach $1$ as $n\to\infty$. Guo~\cite{pp-guo2019} proved the
asymptotic statement under the following assumption.

\begin{definition}[Bounded precision assumption~\cite{pp-guo2019}]
Every agent's type is a rational number whose denominator is
bounded above by a constant $C$.
\end{definition}

\noindent Without the assumption, the best guarantee known, also
from~\cite{pp-guo2019}, tends to $1/2$.

Guo~\cite{pp-guo2019} argues that the assumption is mild, since
currency values and fixed-precision formats satisfy it. It does
more than it appears to, however. Its proof first shows that the
mechanism loses nothing at profiles with total at least $4$, so
only profiles with total below $4$ matter; there a positive type is
at least $1/C$ and so at most $4C$ agents report a positive type,
and all other agents report zero. An agent who reports zero knows the total $s$ exactly
from her own partial total $s_{-i}$, and the mechanism's estimate is
exact whenever at least two of three sampled agents report zero,
which under the assumption happens for all but an $O(C^2/n)$
fraction of samples. In other words, the assumption reduces the
problem to profiles in which almost every agent can compute the
shortfall exactly by assuming her own type is zero, and asymptotic
optimality follows from counting the exceptions. Dropping the
assumption removes this crutch entirely: at the profile in which
every agent reports $1/n$, the mechanism of~\cite{pp-guo2019}
retains $3(n-1)/n$ times the first-best welfare, so its worst-case
efficiency ratio over arbitrary real types is $-2+3/n<0$. The
substantive guarantee of that paper for arbitrary types is
therefore its second mechanism, with ratio $(n+1)/(2n)$, which
tends to $1/2$. That is far below the conjectured optimal ratios,
which increase with $n$ and already equal $2/3$ at
$n=3$~\cite{pp-naroditskiy2012}. The present paper removes the
assumption and replaces counting by averaging: every bound is in
units of the mean report $s/n$, so many tiny positive reports are
no harder than a few large ones.

\begin{theorem}[Asymptotic optimality without bounded precision]
\label{pp:thm:asymptotic}
There are mechanisms $M_n$, one for each $n\geq3$, each
deterministic, anonymous, efficient, dominant-strategy incentive
compatible, and ex post weakly budget balanced on arbitrary real
types in $[0,1]^n$, such that
\[
 0\leq1-\alpha(M_n)=O\!\left(\frac1{\log\log n}\right)
 \qquad(n\to\infty).
\]
\end{theorem}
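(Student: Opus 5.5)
Taking the abstract as the blueprint, we build $h$ as $(n-1)$ times an estimate of $S$ computed from $\theta_{-i}$, and push the estimation error down to $O(1/\log\log n)$ in units of $S$. We reduce to the one-inequality form~\eqref{pp:eq:one-inequality}: it suffices to find $h$ with $(n-1)S\le\sum_i h(\theta_{-i})\le (n-1+\varepsilon_n)S$ and $\varepsilon_n=O(1/\log\log n)$. The natural candidate has the form $h(\theta_{-i})=\frac{n-1}{n}\bigl(s_{-i}+\widehat g_i\bigr)+c\,\widehat g_i$-type corrections. Since $\sum_i s_{-i}=(n-1)s$, the linear part of $S$ is reproduced exactly, and everything reduces to estimating the shortfall $g=G(s)$ from $\theta_{-i}$.

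\textbf{Step 1 (fill-in estimator).} Agent $i$ does not know $\theta_i$. We replace it by a resampled report $\theta_j$, $j\neq i$, averaged over $j$, giving the estimate $\ppFill{}G$ of $g$ with entries $\frac1{n-1}\sum_{j\ne i}G(s_{-i}+\theta_j)$. After summing over $i$, the aggregate $\frac1n\sum_i(\ppFill{}G)_i$ is a symmetric function of the profile. Its discrepancy from $g$ is an operator $(\ppFill{}-\ppId)$ applied to $G$, which to first order acts like a second-difference (swap) operator $\ppSwap{}$ at scale equal to the typical report. Next we iterate, exactly as the abstract describes: we estimate the error of this estimate by the same fill-in applied to the error function, and so on for $m$ rounds. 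The result is a polynomial $P_m(\ppFill{})$ whose residual $R_m=(\ppId-\ppFill{})^{m}$-type combination applied to $G$ is what remains. We choose the combination coefficients so that the total estimate is one-sided, i.e.\ it never underestimates $g$ (this gives non-deficit). We add a safety margin proportional to $s/n$, which is feasible because every agent can bound $s$ from $s_{-i}$ up to $1$.

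\textbf{Step 2 (harmonic error bound, the heart).} We must show that the aggregate residual after $m$ rounds is at most $\frac{1}{H_m}\cdot\frac sn$ in magnitude, uniformly over profiles. I would write $G$ via a trigonometric (Fourier) representation, $G(t)=\int \widehat G(\xi)e^{i\xi t}d\xi$ (or the equivalent Poisson-type representation of $\max\{1-t,0\}$). In this form the fill-in operators act diagonally: each frequency is multiplied by $\phi(\xi)=\frac1n\sum_j e^{i\xi\theta_j}$, an average of characters. Such a $\phi$ extends to a bounded analytic function on the upper half plane with $|\phi|\le1$ and $\phi(0)=1$. The residual then becomes $\int\widehat G(\xi)\,q_m(\phi(\xi))\,d\xi$ for an explicit polynomial $q_m$. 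The main obstacle is bounding this uniformly over all such $\phi$. Here I would apply Poisson--Jensen to $\log|1-\phi|$ and $\log|q_m(\phi)|$ in the upper half plane; this converts the sup over profiles into an extremal problem whose optimum is attained by the harmonic weights $1/k$, yielding the factor $1/H_m$. The normalisation by the mean report comes from $\phi'(0)=i\,s/n$.

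\textbf{Step 3 (finite-$n$ cost and choice of $m$).} Each round requires resampling from $n-1$ other reports. Exact symmetrisation loses a factor of order $k^m/n$, or incurs sampling error from the combinatorial coefficients of $P_m$. Balancing $1/H_m\approx1/\log m$ against an error of order $e^{O(m)}/n$ or $m!/n$ forces $m\approx\log n/\log\log n$, which gives $1/H_m=O(1/\log\log n)$. Profiles with $s\ge 4$ (or with large $s$ generally) have $g=0$, and there each agent can certify that $g=0$, so no loss occurs. For small $s$, the ratio $\frac{s/n\cdot n}{S}\le s\le 4$ keeps the welfare loss $O(1/H_m)$ relative to $S\ge1$. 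Finally, we verify anonymity, determinism and DSIC, which are automatic for Groves mechanisms with symmetric $h$.
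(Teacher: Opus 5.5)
Your architecture (fill in, iterate, Fourier representation, Poisson--Jensen, surcharge, $m$ of order $\log n$) matches the paper's, but the heart, Steps~1--2, does not go through as set up. The first problem is the iteration itself. An individual agent's error $G(s)-(\ppFill{}G)(s_{-i})$ depends on $\theta_i$, so it is not a known function evaluated at $s$, and there is nothing for her to ``fill in'' next. What makes the iteration well defined is an exact identity, not a first-order approximation: since $s_{-i}=s-\theta_i$, we have $\frac1n\sum_i(\ppFill{}f)(s_{-i})=(\ppSwap{}f)(s)$, with $\ppSwap{}$ the trade operator. Only the aggregate error $((\ppId-\ppSwap{})G)(s)$ is a known function at $s$. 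So each agent must fill in $G,E_1,\dots,E_{m-1}$ with $E_k=(\ppId-\ppSwap{})^kG$, and the aggregate residual is then exactly $((\ppId-\ppSwap{})^mG)(s)$. Because $\ppSwap{}$ multiplies $\cos((t-1)u)$ by the real number $q(u)=|\phi(u)|^2\in[0,1]$, this residual equals $-\frac1\pi\int_0^\infty\cos((s-1)u)\,(1-q(u))^m\,u^{-2}\,\mathrm du$ with $0\le1-q\le1$. Your residual, a polynomial in the complex multiplier $\phi$ such as $(1-\phi)^m$, is the wrong object: $|1-\phi|$ can equal $2$ (all reports $1$, $u=\pi$), so it does not contract. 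Iterating literally with $\ppFill{}$-based error functions gives the aggregate residual multiplier $(1-\bar\phi)+\bar\phi(1-\phi)^m$, whose first-order part $1-\bar\phi$ does not decay in $m$. For example, with all reports equal and $s\le\frac12$, every $m\ge2$ overshoots $g$ by exactly $s/n$, a welfare loss of about $s$.

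The harmonic factor then needs two concrete steps that you do not supply. First, the pointwise inequality $H_m(1-x)^m\le-\log x$ for $x\in(0,1]$ (truncate $-\log(1-y)=\sum_{k\ge1}y^k/k$ and use $y^k\ge y^m$), applied at $x=q(u)$. This is where the weights $1/k$ enter; it requires $q$ to be real and in $[0,1]$, and no extremal problem is involved. Second, the Bode-type bound $\frac1\pi\int_0^\infty-\log q(u)\,u^{-2}\,\mathrm du\le s/n$. It comes from applying Poisson--Jensen to $\phi$ itself, which satisfies $|\phi|\le1$ on the closed upper half plane, bounding the interior value by convexity, $\phi(\mathrm iy)=\frac1n\sum_je^{-y\theta_j}\ge e^{-ys/n}$, and letting $y\downarrow0$ by monotone convergence. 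Applying Poisson--Jensen to $1-\phi$ or $q_m(\phi)$, as you propose, points the wrong way. It bounds the boundary average of $\log|f|$ from \emph{below}, so it says these functions are not small on average, whereas you need an upper bound on the residual.

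In Step~3 the replacement cost must be proved, not asserted, and in units of the mean report $s/n$. The paper obtains $4m2^ms/n^2$ from the one-draw gap $d_i=\frac1n\bigl(\theta_i+\frac{s_{-i}}{n-1}\bigr)$, whose average over $i$ is $2s/n^2$ thanks to $|\theta_i-\theta_j|\le\theta_i+\theta_j$, plus Lipschitz bookkeeping with coefficients of total absolute value below $2^m$. There are also smaller slips. The linear part of $h$ must be $s_{-i}$, not $\frac{n-1}ns_{-i}$, or the sum falls short of $(n-1)s$. One-sidedness does not come from choosing coefficients; it comes from the admissible surcharge $\frac{b_{n,m}}ns_{-i}$, which sums exactly to $\frac{n-1}nb_{n,m}s$ and so covers the worst-case deficit. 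Finally, no separate treatment of $s\ge4$ is needed, since every bound is already proportional to $s/n$ and $s\le S$.
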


The guarantee is worst-case: it holds at every profile and
makes no distributional assumption. Here and below, $\log$
denotes the natural logarithm. The rest of the paper constructs
$M_n$ and proves the theorem.

\section{Related Research}\label{pp:sec:related}

\paragraph{VCG redistribution.}
VCG redistribution mechanisms have been studied extensively in
resource allocation settings. Cavallo~\cite{pp-cavallo2006}
proposed a redistribution mechanism applicable to any VCG setting:
each agent receives $1/n$ of the smallest total VCG payment
consistent with the others' reports. For single-item auctions it is
asymptotically optimal, with worst-case efficiency ratio approaching
$1$ as the number of agents grows, and Guo~\cite{pp-guo2011aaai}
obtained a similar result for combinatorial auctions with gross
substitutes. For fixed numbers of agents, worst-case optimal
mechanisms are known for multi-unit and heterogeneous-item
auctions~\cite{pp-moulin2009,pp-guoconitzer2009,pp-gujar2011,pp-guo2012aamas}.
In these settings the optimal redistribution is linear in the order
statistics of the bids, and the designs for richer settings
generalise this one-dimensional linear structure. The public
project setting is different: the efficient decision depends on
the total of all reports, the optimal redistribution is not linear
even for three agents~\cite{pp-naroditskiy2012}, and manual
derivation of optimal mechanisms has proved far harder.

\paragraph{The public project problem, 2012 to 2019.}
Naroditskiy et al.~\cite{pp-naroditskiy2012} initiated worst-case
VCG redistribution mechanism design for the public project problem.
They proved an upper bound on the achievable worst-case efficiency
ratio for each $n$, conjectured it to be tight, derived the optimal
mechanism for $n=3$, and proposed the sampling-based redistribution
(SBR) heuristic: each agent estimates the total VCG payment by
assuming that her own type is drawn uniformly from the other
reports, after which a Cavallo-style charge of $1/n$ of the largest
possible deficit restores the non-deficit constraint. The fill-in
operation of the present paper is this heuristic applied to the
shortfall. The upper bounds follow a pattern that approaches $1$ as
$n\to\infty$, which is the origin of the conjecture that the
asymptotic worst-case ratio is $1$.
Subsequent work for $n\geq4$ was heuristic or computational. The
ABR mechanism~\cite{pp-guo2016} achieves a constant asymptotic
worst-case ratio of $0.102$. Automated mechanism design with sampled
worst-case profiles~\cite{pp-guoshen2017} gives better ratios for
small $n$ but is computationally expensive to run, and the ratios
of the SBR and ABR heuristics had to be computed numerically; as a
result, ratios for these mechanisms were known only for at most
$10$ agents~\cite{pp-guo2019}.
Guo~\cite{pp-guo2019} then gave a mechanism with an analytic
worst-case ratio of $(n+1)/(2n)$, hence $1/2$ asymptotically, for
arbitrary types, and a second mechanism whose worst-case ratio tends
to $1$ under the bounded precision assumption discussed in
Section~\ref{pp:sec:mainresult}.

\paragraph{Neural and automated approaches, 2018 to the present.}
In response to the analytic barrier, recent work turns to neural
networks: Manisha et al.~\cite{pp-manisha2018} learn redistribution
mechanisms for auctions, Wang et al.~\cite{pp-wang2021aamas} study
neural mechanism design for public projects, though not in the
redistribution setting, and Wang et al.~\cite{pp-wang2021wiiat}
learn redistribution for the public project problem with worst
cases estimated by Monte Carlo sampling.
Guo~\cite{pp-guo2023lottery} combined neural training with exact
worst-case verification by mixed integer programming, which
revealed that sampled worst cases are substantially inflated, and
found the optimal mechanisms for up to $5$ agents, confirming the
tightness of the upper bounds conjectured in~\cite{pp-naroditskiy2012}
for those $n$.
These methods produce certified mechanisms for small fixed $n$. They
say nothing about the limit $n\to\infty$, which is the subject of
this paper.

\paragraph{Contribution.}
We give the first deterministic mechanism for the public project
problem that is efficient, strategy-proof, and non-deficit for
arbitrary real types in $[0,1]$ and whose worst-case efficiency
ratio tends to $1$, at rate $1-O(1/\log\log n)$, settling the
asymptotic question raised by~\cite{pp-naroditskiy2012}. The mechanism is a finite
averaging scheme that estimates the shortfall, then the error of
the estimate, then the error of the error. Its analysis is where
the novelty lies: the rate rests on a bound for an iterated
averaging operator that we prove through a trigonometric
representation of the shortfall function and the Poisson--Jensen
inequality for bounded analytic functions in the upper half plane,
applied to the characteristic function of a uniformly drawn report.
Complex analysis is rare in mechanism design, where arguments are
typically combinatorial, probabilistic, or linear-programming
based; to our knowledge this is its first use in the design of VCG
redistribution payments.

\section{The Idea}\label{pp:sec:idea}

\begin{figure}[t]
\centering
\begin{tikzpicture}[
  font=\small,
  box/.style={draw,rounded corners=2pt,inner sep=4pt,minimum width=3.4cm,align=center},
  note/.style={font=\footnotesize,text=black!70,align=left,anchor=west},
  arr/.style={-{Stealth[length=5pt]},thick},
  lab/.style={font=\footnotesize,right,midway,xshift=1pt},
  node distance=7mm]
 \node[box] (G)  {target $g=G(s)$};
 \node[box,below=of G]  (G1) {$G_1=\mathsf TG$};
 \node[box,below=of G1] (G2) {$G_2=G_1+\mathsf TE_1$};
 \node[below=4mm of G2,inner sep=1pt] (dots) {$\vdots$};
 \node[box,below=4mm of dots] (Gm) {$G_m$,\quad $|\varepsilon_m|\le s/(nH_m)$};
 \node[box,below=of Gm,fill=black!5] (phi) {$\phi$: same recipe with $\mathsf A_{-i},\mathsf T_{-i}$};
 \node[box,below=of phi,fill=black!5] (h) {$h=s_{-i}+\tfrac{n-1}{n}\phi+\tfrac{b_{n,m}}{n}s_{-i}$};
 \draw[arr] (G)  -- node[lab] {fill in $G$} (G1);
 \draw[arr] (G1) -- node[lab] {fill in $E_1$} (G2);
 \draw[arr] (G2) -- (dots);
 \draw[arr] (dots) -- node[lab] {fill in $E_{m-1}$} (Gm);
 \draw[arr] (Gm) -- node[lab] {replace $\mathsf A,\mathsf T$: cost $\le4m2^ms/n^2$} (phi);
 \draw[arr] (phi) -- node[lab] {surcharge} (h);
 \node[note,left=4mm of G1.west,anchor=east,align=right] {error\\$E_1=(\mathrm I-\mathsf T)G$};
 \node[note,left=4mm of G2.west,anchor=east,align=right] {error\\$E_2=(\mathrm I-\mathsf T)^2G$};
 \node[note,left=4mm of Gm.west,anchor=east,align=right] {error\\$E_m=(\mathrm I-\mathsf T)^mG$};
 \node[note,left=4mm of h.west,anchor=east,align=right] {$0\le L\le2b_{n,m}S$\\$b_{n,m}=\tfrac1{H_m}+\tfrac{4m2^m}{n}$};
\end{tikzpicture}
\caption{The construction at a glance. Above the shaded boxes:
the analysis with the full-list operations, each round filling in
the current error function; after $m$ rounds the aggregated
estimate is $g_m=G_m(s)$ with error $\varepsilon_m$, bounded by
Lemma~\ref{pp:lem:approximation}. Shaded boxes: the mechanism
replaces $\mathsf A,\mathsf T$ by their admissible versions at a
cost bounded by Lemma~\ref{pp:lem:deletion}, and a surcharge
calibrated to the total error bound makes it non-deficit
(Theorem~\ref{pp:thm:main}).}
\label{pp:fig:idea}
\end{figure}

Figure~\ref{pp:fig:idea} summarises the construction described in
this section.

Consider the Groves term
\[
 h(\theta_{-i})=s_{-i}+\frac{n-1}{n}\,g.
\]
Summed over the agents it gives exactly $(n-1)S(\theta)$, so the
mechanism would collect no revenue and achieve the first-best
welfare. It is not admissible, because the shortfall $g=G(s)$
depends on the total $s$, which includes agent $i$'s own report.
Throughout the paper we are therefore interested in evaluating
functions of the total, such as $G(s)$, when agent $i$ has access
only to the other reports $\theta_{-i}$, and hence only to her
partial total $s_{-i}$.

\paragraph{Filling in.}
A natural trick, used by Naroditskiy et al.~\cite{pp-naroditskiy2012}
to estimate the total VCG payment, is to draw a report uniformly
from $\theta_{-i}$ and use it in place of the missing one, averaging
over the draw. Unfortunately this is awkward to work with
mathematically, because
every agent draws from a different list. A much cleaner structure
is for every agent to draw uniformly from the full profile
$\theta$, her own report included. This is not admissible: sometimes
she draws her own report, which is a value she must not use. We call
it the \emph{fill-in} operation all the same, and we allow it
throughout the analysis, because with it the error has an exact
form at every stage. At the end we show that the inadmissible draw
can be replaced by the admissible one, drawing from the other
reports only, and that the swap changes the outcome by a negligible
amount as $n\to\infty$.

\paragraph{Error, then error of the error.}
We use the fill-in to estimate $G(s)$. The estimate has an error,
which we call the first-order error. The point that makes the
construction work is that this error is again a function of $s$:
not a number anyone knows, but a known function evaluated at the
unknown total. So we can estimate it by the same fill-in and add
the estimate back. That does not eliminate the first-order error,
since it is only an estimate of it, and what remains is the
second-order error. It too is a function of $s$, so we fill it in
as well, which leaves a third-order error, and so on to the $m$th.
We show that after $m$ rounds the remaining error is at most
$1/H_m$ times the mean report $s/n$, where
$H_m=1+\frac12+\frac13+\cdots+\frac1m$ is the $m$th harmonic number
(Lemma~\ref{pp:lem:approximation}). Each round helps only a little,
but $H_m\to\infty$. The scale matters: the error of the average
estimate is multiplied by $n-1$ in the revenue, so an error of a
vanishing fraction of $s/n$ is exactly what makes the welfare loss
a vanishing fraction of $S(\theta)$.

\paragraph{Replacing the inadmissible draw.}
The mechanism draws from the other $n-1$ reports, not from all $n$.
Each draw is affected by about one part in $n$, and the $m$-round
estimate involves fewer than $m2^m$ draws, so the swap costs at most
about $m2^m/n$ mean reports (Lemma~\ref{pp:lem:deletion}). Hence a
trade-off: more rounds shrink the harmonic-number term but inflate
this cost. Taking $m\approx\frac12\log_2n$ balances them: the
total error of the mechanism's estimate is of order $1/\log\log n$
mean reports, and after the budget repair below this becomes the
rate at which $1-\alpha(M_n)$ tends to zero.

\paragraph{Budget repair.}
The sign of the final error is uncertain, so the mechanism may run
a deficit. But the error is bounded, and the bound is proportional
to the total $s$. So we charge every agent her share of the bound.
Agent $i$ does not know $s$, so her charge is computed from
$s_{-i}$, which is admissible; summed over agents these charges
make up the bound exactly, so the final error is one-sided and the
mechanism is non-deficit (Section~\ref{pp:sec:mechanism}). This is
the additive repair of~\cite{pp-guo2023lottery}: a certified
deficit bound is covered by charges independent of each agent's
own report. The magnitude
of the welfare loss is at most doubled, and a constant factor is
nothing when the magnitude goes to zero.

\section{Constructing the Shortfall Estimate}\label{pp:sec:construction}

Fix $m\geq1$, the number of rounds of error correction described
in Section~\ref{pp:sec:idea}. Agent $i$ knows the shortfall
function $G$ and her partial total $s_{-i}$; she does not know
$s$, hence not the shortfall $g=G(s)$. In this paper,
$\text{error}=\text{truth}-\text{estimate}$, not the other way
around.

\subsection{Filling in and trading}
\label{pp:sec:operators}

Let $f:\ppR\to\ppR$ be a known function. We would like to evaluate
$f$ at the total $s$; one particular case we pay attention to is
$f=G$. Each agent $i$ knows only her partial total $s_{-i}$, which
is $s$ with one report missing. Following
Section~\ref{pp:sec:idea}, the \emph{fill-in} operation replaces
the missing report by one of the $n$ reports, chosen uniformly at
random, and averages. It is a map from functions to functions,
written $\ppFill{}$, and $\ppFill{}f$ is short for $\ppFill{}$
applied to $f$:
\begin{equation}\label{pp:eq:fillin}
 (\ppFill{}f)(t)=\frac1n\sum_{j=1}^nf(t+\theta_j).
\end{equation}
Agent $i$'s fill-in estimate of $f(s)$ is $(\ppFill{}f)(s_{-i})$.
As discussed, drawing from all $n$ reports is not admissible; we
analyse it anyway and account for the difference in
Section~\ref{pp:sec:individual}.

Because the total redistribution is always the sum over all
agents, what matters is the average of the agents' estimates.
Since $s_{-i}=s-\theta_i$,
\begin{equation}\label{pp:eq:averaging}
 \begin{aligned}
 \frac1n\sum_{i=1}^n(\ppFill{}f)(s_{-i})
  &=\frac1{n^2}\sum_{i=1}^n\sum_{j=1}^nf(s-\theta_i+\theta_j)
  =(\ppSwap{}f)(s),\\
 \text{where}\quad
 (\ppSwap{}f)(t)&=\frac1{n^2}\sum_{j=1}^n\sum_{k=1}^nf(t-\theta_j+\theta_k).
 \end{aligned}
\end{equation}
The map $\ppSwap{}$ is the \emph{trade} operation: it evaluates $f$
after one report drawn from all $n$ is removed from the total and
another is added. It is the other construct our analysis uses,
and in fact the one used most; like filling in, it draws from all
$n$ reports and is not admissible as it stands, which again does
not matter for the analysis. Equation~\eqref{pp:eq:averaging} is
the link between the two: filling in, when aggregated over all
agents, each of whom has access only to $\theta_{-i}$, becomes
trading. Both maps are linear. Repeated application is written as
a power, $\ppSwap{}^2f=\ppSwap{}(\ppSwap{}f)$, and $\ppSwap{}^0=\ppId$
is the identity.

\subsection{Error, then error of the error}\label{pp:sec:error}

We follow the narrative of Section~\ref{pp:sec:idea}, now with
formulas. The agents fill in the shortfall function $G$ itself.
By~\eqref{pp:eq:averaging}, their aggregated estimate of $g=G(s)$
is $G_1(s)$ and its error is $E_1(s)$, where
\begin{equation}\label{pp:eq:round1}
 G_1=\ppSwap{}G,
 \qquad
 E_1=G-G_1=(\ppId-\ppSwap{})G .
\end{equation}
We call $\varepsilon_1=E_1(s)$ the \emph{first-order error}.
Equation~\eqref{pp:eq:round1} is the observation that drives the
construction: $E_1$ is a known function, built from $G$ and
$\ppSwap{}$ alone, and only its value at the unknown $s$ is
unknown. This is exactly the situation we started in with $G$ and
$g$.

So the agents fill in $E_1$ as well, to estimate the first-order
error, and add the estimate to their estimate of $g$.
By~\eqref{pp:eq:averaging} the aggregated estimate becomes
$G_2(s)$, and after offsetting the estimated error the residual is
the \emph{second-order error} $\varepsilon_2=E_2(s)$, where
\begin{equation}\label{pp:eq:round2}
 G_2=G_1+\ppSwap{}E_1=2\ppSwap{}G-\ppSwap{}^2G,
 \qquad
 E_2=G-G_2=(\ppId-\ppSwap{})^2G .
\end{equation}
Thus $G_2$ estimates $G$ via $G_1$ and a correction for $G_1$'s
error $E_1$. The same step once more, filling in $E_2$, gives
\begin{equation}\label{pp:eq:round3}
 G_3=G_2+\ppSwap{}E_2=3\ppSwap{}G-3\ppSwap{}^2G+\ppSwap{}^3G,
 \qquad
 E_3=G-G_3=(\ppId-\ppSwap{})^3G .
\end{equation}
The pattern in \eqref{pp:eq:round1}--\eqref{pp:eq:round3} is
that each round adds $\ppSwap{}$ of the current error function to
the estimate and multiplies the error function by
$\ppId-\ppSwap{}$.

\begin{definition}[Error functions and aggregated estimates]
\label{pp:def:correction}
For $k\geq1$, the \emph{$k$th-order error function} and the
\emph{aggregated estimate after $k$ rounds} are
\begin{equation}\label{pp:eq:aggregate-estimate}
 E_k=(\ppId-\ppSwap{})^kG,
 \qquad
 G_k=G-E_k,
\end{equation}
with values $\varepsilon_k=E_k(s)$ and $g_k=G_k(s)$ at the total,
so that $g=g_k+\varepsilon_k$.
\end{definition}

The round rule $G_{k+1}=G_k+\ppSwap{}E_k$ holds for every $k$,
because $E_k-E_{k+1}=\ppSwap{}E_k$; iterating it from
$G_1=\ppSwap{}G$ gives $G_m=\ppSwap{}(G+E_1+\cdots+E_{m-1})$. That
is, after $m$ rounds every agent has filled in
$G,E_1,\ldots,E_{m-1}$, and by~\eqref{pp:eq:averaging} the
aggregated estimate is $g_m$ with error
$\varepsilon_m$.\footnote{Estimating the error of an estimate and
adding it back, repeatedly, is the iterated bootstrap bias
correction of Hall and Martin~\cite{pp-hallmartin1988}, whose
error after $k$ rounds is likewise $(\ppId-\ppSwap{})^kG$ for a
suitable averaging operator $\ppSwap{}$. There the gain per round
is a power of the sample size, for smooth targets; here the target
has a kink at the project cost and the gain is the harmonic factor
of Lemma~\ref{pp:lem:approximation}.} By~\eqref{pp:eq:welfare}, a
mechanism using these estimates would retain revenue
$(n-1)(g_m-g)=-(n-1)\varepsilon_m$, which may have either sign.
The following lemma says that the error is at most $1/H_m$ mean
reports in absolute value.

\begin{lemma}[Error after $m$ rounds]\label{pp:lem:approximation}
For every $n\geq3$, $\theta\in[0,1]^n$, and $m\geq1$,
\begin{equation}\label{pp:eq:approximation}
 |\varepsilon_m|\leq\frac{s}{nH_m}.
\end{equation}
\end{lemma}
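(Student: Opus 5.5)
The plan is to diagonalise the trade operation with Fourier modes, so that $(\ppId-\ppSwap{})^m$ becomes multiplication by a scalar, and then control the resulting integral with a complex-analytic inequality for the characteristic function of a uniformly drawn report. Let $X,Y$ be independent, each uniform on the multiset $\{\theta_1,\dots,\theta_n\}$. Then $(\ppSwap{}f)(t)=\ppE f(t-X+Y)$. Write $\varphi(\xi)=\frac1n\sum_j e^{i\xi\theta_j}$ and $\psi(\xi)=|\varphi(\xi)|^2\in[0,1]$. Since $Y-X$ is symmetric with characteristic function $\psi$, we have $\ppSwap{}\cos(\xi(\cdot-1))=\psi(\xi)\cos(\xi(\cdot-1))$. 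Moreover $\ppSwap{}$ fixes constants and affine functions, because $\ppE(Y-X)=0$.

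\textbf{Trigonometric representation.} Write $G(t)=\tfrac12(1-t)+\tfrac12|1-t|$ and use $|x|=\frac2\pi\int_0^\infty\frac{1-\cos(\xi x)}{\xi^2}\,d\xi$. This gives
\[
 G(t)=\tfrac12(1-t)+\frac1\pi\int_0^\infty\frac{1-\cos(\xi(t-1))}{\xi^2}\,d\xi .
\]
For $m\ge1$ the operator $(\ppId-\ppSwap{})^m$ kills the affine part and the constant $1$. On $-\cos(\xi(\cdot-1))$ it acts as multiplication by $(1-\psi(\xi))^m$. Interchanging $\ppSwap{}$ (a finite average) with the integral is justified, since the integrand is bounded by $\min\{x^2/2,2/\xi^2\}$ uniformly on bounded $t$. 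We obtain
\[
 |\varepsilon_m|\le\frac1\pi\int_0^\infty\frac{(1-\psi(\xi))^m}{\xi^2}\,d\xi .
\]
This integral converges: near $0$ we have $1-\psi=O(\xi^2)$, and the tail is integrable.

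\textbf{Harmonic step.} Put $u=1-\psi\in[0,1]$. Then $-\log\psi=\sum_{k\ge1}u^k/k\ge\sum_{k=1}^m u^k/k\ge H_m u^m$, because $u^k\ge u^m$ for $k\le m$. Hence, using that $|\varphi|$ is even,
\[
 |\varepsilon_m|\le\frac1{\pi H_m}\int_0^\infty\frac{-\log\psi(\xi)}{\xi^2}\,d\xi=\frac1{\pi H_m}\int_{-\infty}^\infty\frac{-\log|\varphi(\xi)|}{\xi^2}\,d\xi .
\]
It therefore suffices to prove $\int_{\ppR}\frac{-\log|\varphi(\xi)|}{\xi^2}\,d\xi\le\pi\mu$, where $\mu=s/n$.

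\textbf{Poisson--Jensen step (the main obstacle).} Because all $\theta_j\ge0$, $\varphi(z)=\frac1n\sum_j e^{iz\theta_j}$ is analytic in the upper half plane, continuous up to $\ppR$, and satisfies $|\varphi|\le1$ there. For bounded analytic functions, $\log|\varphi|$ is dominated by the Poisson integral of its boundary values; the Blaschke and singular inner factors only decrease it. At $z=iy$ this gives
\[
 \log\varphi(iy)\le\frac1\pi\int_{\ppR}\frac{y\,\log|\varphi(\xi)|}{y^2+\xi^2}\,d\xi .
\]
Here $\varphi(iy)=\frac1n\sum_j e^{-y\theta_j}\ge e^{-y\mu}$ by convexity of $x\mapsto e^{-yx}$. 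Dividing by $y$ yields
\[
 \int_{\ppR}\frac{-\log|\varphi(\xi)|}{y^2+\xi^2}\,d\xi\le\pi\mu\qquad\text{for every }y>0 .
\]
Letting $y\downarrow0$ and applying Fatou's lemma (the integrand is nonnegative) gives the required bound. Combining the three steps yields $|\varepsilon_m|\le\mu/H_m=s/(nH_m)$.

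The delicate points all sit in this last step. First, we must state and justify the half-plane Poisson--Jensen inequality for a function that is merely bounded and continuous on the closed half plane. Second, $\varphi$ may vanish on $\ppR$, where $-\log|\varphi|=+\infty$; this is harmless, since the inequality holds in $[0,\infty]$ and finiteness shows the zeros are null. I would first try to cite the canonical factorisation of $H^\infty$ of the half plane. Failing that, I would prove subharmonicity of $\log|\varphi|$ directly and use the Poisson majorant on half discs of growing radius, with the boundedness $|\varphi|\le1$ controlling the arc contribution.
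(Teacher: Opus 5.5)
Your proposal is correct and is essentially the paper's proof. It uses the same hinge representation of $G$ through $\frac1\pi\int_0^\infty(1-\cos(bu))/u^2\,\mathrm du=|b|/2$ and the same diagonalisation of the trade operation on cosines with eigenvalue $|\frac1n\sum_j e^{\mathrm iu\theta_j}|^2$. It also uses the same scalar bound $H_m(1-x)^m\le-\log x$ and the same Poisson--Jensen estimate at $\mathrm iy$ combined with $\frac1n\sum_j e^{-y\theta_j}\ge e^{-ys/n}$. The differences are cosmetic: you let $y\downarrow0$ via Fatou's lemma where the paper uses monotone convergence, and your $\varphi,\psi$ are the paper's $\psi,q$.
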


The proof makes use of the Poisson--Jensen inequality for bounded
analytic functions~\cite{pp-garnett2007}, a result from complex
analysis, so in this proof, and here only, we introduce constructs
involving the imaginary unit $\mathrm i$, with $\mathrm i^2=-1$;
in this proof $\mathrm i$ never denotes an agent.

\paragraph{Setup.}
Let $X,Y$ be independent draws from all $n$ reports, uniformly
and with replacement, and let $D=Y-X$, so that
$(\ppSwap{}f)(t)=\ppE\,f(t+D)$ by~\eqref{pp:eq:averaging}. The
trade $D$ is symmetric, since $X$ and $Y$ have the same law, so
$\ppE\sin(uD)=0$ for every real $u$. Define
\begin{equation}\label{pp:eq:qdef}
 q(u)=\ppE\cos(uD),\qquad u\in\ppR .
\end{equation}
Writing out the expectation over the $n^2$ equally likely pairs
$(X,Y)=(\theta_j,\theta_k)$ and using
$\cos(a-b)=\cos a\cos b+\sin a\sin b$, the double sum factors into
products of single sums:
\begin{equation}\label{pp:eq:q-modulus}
 \begin{aligned}
 q(u)&=\frac1{n^2}\sum_{j=1}^n\sum_{k=1}^n
       \cos\bigl(u\theta_k-u\theta_j\bigr)\\
     &=\frac1{n^2}\sum_{j,k}\bigl[\cos(u\theta_k)\cos(u\theta_j)
       +\sin(u\theta_k)\sin(u\theta_j)\bigr]\\
     &=\Bigl(\frac1n\sum_{j}\cos(u\theta_j)\Bigr)^2
      +\Bigl(\frac1n\sum_{j}\sin(u\theta_j)\Bigr)^2
     =\Bigl|\frac1n\sum_{j=1}^ne^{\mathrm iu\theta_j}\Bigr|^2 ,
 \end{aligned}
\end{equation}
the last step because $|C+\mathrm iS|^2=C^2+S^2$ for real $C,S$.
Three consequences of~\eqref{pp:eq:q-modulus}: $q\geq0$, being a
squared modulus; $q\leq1$, because an average of $n$ unit complex
numbers has modulus at most $1$; and $q(0)=1$, since every term
is $1$ at $u=0$.

\paragraph{Two facts.}
The first is the classical integral, valid for every real $b$,
\begin{equation}\label{pp:eq:fejer}
 \frac1\pi\int_0^\infty\frac{1-\cos(bu)}{u^2}\,\mathrm du=\frac{|b|}2 .
\end{equation}
To see it, integrate by parts with $\mathrm d(-1/u)$: the boundary
terms vanish, since $1-\cos(bu)=O(u^2)$ at $0$ and is bounded at
$\infty$, leaving $\int_0^\infty b\sin(bu)/u\,\mathrm du$, which is
$\frac\pi2|b|$ by the Dirichlet integral
$\int_0^\infty\sin(v)/v\,\mathrm dv=\frac\pi2$ after the
substitution $v=|b|u$. The second fact is the following lemma.

\begin{lemma}[Logarithmic integral bound]\label{pp:lem:log}
The integral below converges and
\begin{equation}\label{pp:eq:log}
 \frac1\pi\int_0^\infty\frac{-\log q(u)}{u^2}\,\mathrm du\leq\frac sn .
\end{equation}
\end{lemma}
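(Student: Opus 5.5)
The plan is to recognise $q$ as the squared modulus of a bounded analytic function on the upper half plane. Then the Poisson--Jensen inequality at the points $\mathrm iy$, combined with a limit $y\downarrow0$, turns the known behaviour of that function near the origin into the bound \eqref{pp:eq:log}.

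\textbf{Step 1: the analytic extension.} By \eqref{pp:eq:q-modulus}, $q(u)=|F(u)|^2$, where
\[
F(z)=\frac1n\sum_{j=1}^n e^{\mathrm i z\theta_j}
\]
is entire. Since every $\theta_j\geq0$, for $z=x+\mathrm iy$ with $y\geq0$ we have $|e^{\mathrm iz\theta_j}|=e^{-y\theta_j}\leq1$. Hence $|F|\leq1$ on the closed upper half plane, and $\log|F|\leq0$ there. Moreover $F(0)=1$, so $F\not\equiv0$, and its real zeros are isolated and of finite order. It follows that $\log|F(t)|$ is locally integrable on $\ppR$, and the boundary integrals below make sense, possibly with value $-\infty$.

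\textbf{Step 2: Poisson--Jensen.} For a bounded analytic function in the upper half plane, continuous up to the real axis, $\log|F|$ is subharmonic. Its least harmonic majorant is dominated by the Poisson integral of its boundary values~\cite{pp-garnett2007}. Concretely, the factorisation into Blaschke product, singular inner part, and outer part yields an extra term $-\sigma y\leq0$, which we simply drop. This gives, for every $y>0$,
\[
\log|F(\mathrm iy)|\leq\frac{y}{\pi}\int_{-\infty}^{\infty}\frac{\log|F(t)|}{t^2+y^2}\,\mathrm dt .
\]
Dividing by $y$,
\[
\frac{1}{y}\log|F(\mathrm iy)|\leq\frac1\pi\int_{\ppR}\frac{\log|F(t)|}{t^2+y^2}\,\mathrm dt .
\]

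\textbf{Step 3: letting $y\downarrow0$.} On the left, $F(\mathrm iy)=\frac1n\sum_j e^{-y\theta_j}=1-y\,\frac sn+O(y^2)$. The left side therefore tends to $-s/n$. On the right, the integrand $\log|F(t)|/(t^2+y^2)$ is nonpositive, and its absolute value increases monotonically to $-\log|F(t)|/t^2$ as $y\downarrow0$. By monotone convergence the right side tends to $-\frac1\pi\int_{\ppR}\frac{-\log|F(t)|}{t^2}\,\mathrm dt$. We conclude that
\[
\frac1\pi\int_{\ppR}\frac{-\log|F(t)|}{t^2}\,\mathrm dt\leq\frac sn .
\]
Since the integrand is nonnegative, this also shows that the integral converges. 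The convergence asserted in Lemma~\ref{pp:lem:log} is thus a by-product of the bound, with no separate argument needed near $t=0$.

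\textbf{Step 4: symmetrising.} We have $|F(-t)|=|\overline{F(t)}|=|F(t)|$ for real $t$, so the integral over $\ppR$ is twice the integral over $(0,\infty)$. Also $-\log q=-2\log|F|$. Combining these,
\[
\frac1\pi\int_0^\infty\frac{-\log q(u)}{u^2}\,\mathrm du=\frac1\pi\int_{\ppR}\frac{-\log|F(t)|}{t^2}\,\mathrm dt\leq\frac sn ,
\]
which is \eqref{pp:eq:log}.

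\textbf{Main obstacle.} The step needing the most care is Step 2: invoking the right form of Poisson--Jensen when $F$ may vanish on the real axis, so that $\log|F|$ is unbounded below at isolated points. The version I would invoke is the one for $H^\infty$ of the half plane, in which $\log|F|\leq P_y*\log|F|$ holds whenever $F$ is bounded, not identically zero, and continuous on the closed half plane. If a cleaner route is wanted, an alternative is to apply the inequality to $F_\delta(z)=F(z+\mathrm i\delta)$, which has no real zeros, and let $\delta\downarrow0$ using Fatou's lemma; this is legitimate because $-\log|F_\delta|\geq0$.
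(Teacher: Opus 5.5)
Your proposal is correct and follows the paper's proof essentially step for step: your $F$ is the paper's $\psi$, bounded by $1$ on the closed upper half plane, followed by Poisson--Jensen at $\mathrm iy$, division by $y$, monotone convergence as $y\downarrow0$ (which also gives convergence of the integral), and symmetrisation using $q=|F|^2$ on the real line. The only difference is how you handle the left-hand side: you compute $\lim_{y\downarrow0}\frac1y\log F(\mathrm iy)=-s/n$ by Taylor expansion, whereas the paper uses convexity of $x\mapsto e^{-yx}$ to get $-\log\psi(\mathrm iy)\le ys/n$ for every $y>0$ before passing to the limit, and both work (one small correction to your optional aside: $F(\cdot+\mathrm i\delta)$ can have real zeros when $F$ vanishes at height $\delta$, so ``no real zeros'' holds only for all but countably many $\delta$, which still suffices).
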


The lemma is an instance of what control theory calls a Bode
integral~\cite[Chapter~3]{pp-seron1997}; we include a proof for
completeness, from the following classical inequality.

\begin{quote}
\emph{Poisson--Jensen inequality}
\cite[Chapter~II, Theorem~4.1 and Eq.~(4.3)]{pp-garnett2007}.
Let $\psi$ be analytic and bounded on the upper half plane
$\{w:\operatorname{Im}w>0\}$, continuous up to the real axis, and
not identically zero. Then for every $y>0$,
\begin{equation}\label{pp:eq:poisson-jensen}
 \log|\psi(\mathrm iy)|
 \leq\frac1\pi\int_{-\infty}^{\infty}
   \frac{y\,\log|\psi(u)|}{u^2+y^2}\,\mathrm du .
\end{equation}
\end{quote}

\begin{proof}[Proof of Lemma~\ref{pp:lem:log}]
\smallskip\noindent\emph{Step 1: a bounded analytic function.}\par\nopagebreak
Define $\psi(w)=\frac1n\sum_{j=1}^ne^{\mathrm iw\theta_j}$ for complex
$w$. It is a finite sum of exponentials, hence entire. For
$w=u+\mathrm iy$ with $y\geq0$ and every report $\theta_j\geq0$,
\[
 |e^{\mathrm iw\theta_j}|=|e^{\mathrm iu\theta_j}|\,|e^{-y\theta_j}|
 =e^{-y\theta_j}\leq1,
\]
so $|\psi(w)|\leq1$ on the closed upper half plane. It is continuous everywhere, and $\psi(0)=1$, so it is not
identically zero. Thus \eqref{pp:eq:poisson-jensen} applies to
$\psi$. Multiplying it by $-1$,
\begin{equation}\label{pp:eq:pj-applied}
 -\log|\psi(\mathrm iy)|
 \geq\frac1\pi\int_{-\infty}^{\infty}
   \frac{y\,\bigl(-\log|\psi(u)|\bigr)}{u^2+y^2}\,\mathrm du
 \qquad(y>0).
\end{equation}

\smallskip\noindent\emph{Step 2: the left-hand side.}\par\nopagebreak
At $w=\mathrm iy$ the exponentials are real:
$\psi(\mathrm iy)=\frac1n\sum_je^{-y\theta_j}$. The function
$x\mapsto e^{-yx}$ is convex, so by Jensen's inequality the average
of its values is at least its value at the average,
\[
 \psi(\mathrm iy)=\frac1n\sum_{j=1}^ne^{-y\theta_j}
 \geq e^{-y\cdot\frac1n\sum_j\theta_j}=e^{-ys/n}>0 .
\]
Hence
\[
 -\log|\psi(\mathrm iy)|=-\log\psi(\mathrm iy)\leq\frac{ys}n .
\]
Combining with~\eqref{pp:eq:pj-applied} and dividing by $y>0$,
\begin{equation}\label{pp:eq:pj-divided}
 \frac1\pi\int_{-\infty}^{\infty}
   \frac{-\log|\psi(u)|}{u^2+y^2}\,\mathrm du\leq\frac sn
 \qquad\text{for every }y>0 .
\end{equation}

\smallskip\noindent\emph{Step 3: let $y\downarrow0$.}\par\nopagebreak
The integrand in~\eqref{pp:eq:pj-divided} is nonnegative, because
$|\psi(u)|\leq1$, and for each fixed $u\ne0$ it increases to
$-\log|\psi(u)|/u^2$ as $y\downarrow0$. By the monotone convergence
theorem the integral converges to the integral of the limit, which
is therefore finite and satisfies
\begin{equation}\label{pp:eq:pj-limit}
 \frac1\pi\int_{-\infty}^{\infty}\frac{-\log|\psi(u)|}{u^2}\,\mathrm du
 \leq\frac sn .
\end{equation}
This is where convergence comes from: the limit integral is
bounded by $s/n$, so it is finite.

\smallskip\noindent\emph{Step 4: back to $q$.}\par\nopagebreak
Two facts convert~\eqref{pp:eq:pj-limit} into~\eqref{pp:eq:log}.
First, the reports are real, so $\overline{\psi(u)}=\psi(-u)$ for
real $u$, hence $|\psi(-u)|=|\psi(u)|$: the integrand in
\eqref{pp:eq:pj-limit} is an even function of $u$, and its integral
over $\ppR$ is twice its integral over $(0,\infty)$. Second, by
\eqref{pp:eq:q-modulus}, $q(u)=|\psi(u)|^2$ for real $u$, so
$-\log q(u)=-2\log|\psi(u)|$. Using the second fact and then the
first,
\[
 \begin{aligned}
 \frac1\pi\int_0^\infty\frac{-\log q(u)}{u^2}\,\mathrm du
 &=\frac1\pi\int_0^\infty\frac{-2\log|\psi(u)|}{u^2}\,\mathrm du\\
 &=\frac1\pi\int_{-\infty}^{\infty}\frac{-\log|\psi(u)|}{u^2}\,\mathrm du
 \leq\frac sn ,
 \end{aligned}
\]
which is~\eqref{pp:eq:log}.
\end{proof}

\begin{proof}[Proof of Lemma~\ref{pp:lem:approximation}]
\smallskip\noindent\emph{Step 1: an integral formula for $G$.}\par\nopagebreak
For every real $t$,
\[
 G(t)=\max\{1-t,0\}=\frac{|t-1|-(t-1)}2 .
\]
Applying~\eqref{pp:eq:fejer} with $b=t-1$ to the term $|t-1|/2$,
\begin{equation}\label{pp:eq:hinge}
 G(t)=\frac1\pi\int_0^\infty
          \frac{1-\cos((t-1)u)}{u^2}\,\mathrm du-\frac{t-1}{2}.
\end{equation}

\smallskip\noindent\emph{Step 2: apply $(\ppId-\ppSwap{})^m$ to both sides.}\par\nopagebreak
We treat both sides of~\eqref{pp:eq:hinge} as functions of $t$.
Since $\ppSwap{}$ is a finite average over shifts of $t$, it may be
applied under the integral sign. Two observations.

First, $\ppSwap{}$ fixes affine functions of $t$: for $f(t)=at+c$,
\[
 (\ppSwap{}f)(t)=\ppE\,(a(t+D)+c)=at+c+a\,\ppE D=f(t),
\]
because $\ppE D=\ppE Y-\ppE X=0$. Hence
$(\ppId-\ppSwap{})f=0$ for affine $f$, and so
$(\ppId-\ppSwap{})^mf=0$ for every $m\geq1$. In~\eqref{pp:eq:hinge}
this applies to the term $-(t-1)/2$ and to the constant $1$ in
the numerator, which is affine in $t$ as well.

Second, write $c_u(t)=\cos((t-1)u)$ for the cosine
in~\eqref{pp:eq:hinge}. By the addition formula
$\cos(\alpha+\beta)=\cos\alpha\cos\beta-\sin\alpha\sin\beta$ with
$\alpha=(t-1)u$ and $\beta=Du$, and then $\ppE\sin(uD)=0$ and
$\ppE\cos(uD)=q(u)$,
\begin{equation}\label{pp:eq:eigen}
 \begin{aligned}
 (\ppSwap{}c_u)(t)&=\ppE\cos\bigl((t-1+D)u\bigr)\\
 &=\cos((t-1)u)\,\ppE\cos(uD)-\sin((t-1)u)\,\ppE\sin(uD)\\
 &=q(u)\,c_u(t).
 \end{aligned}
\end{equation}
That is, $\ppSwap{}$ acts on $c_u$ by multiplication by the number
$q(u)$. Therefore
\[
 ((\ppId-\ppSwap{})c_u)(t)=c_u(t)-q(u)c_u(t)=(1-q(u))\,c_u(t),
\]
and applying $\ppId-\ppSwap{}$ again multiplies by $1-q(u)$ once more,
so by induction
\begin{equation}\label{pp:eq:eigen-power}
 \bigl((\ppId-\ppSwap{})^mc_u\bigr)(t)=(1-q(u))^m\,c_u(t).
\end{equation}

Now apply $(\ppId-\ppSwap{})^m$, $m\geq1$, to both sides
of~\eqref{pp:eq:hinge}, under the integral sign on the right. By
the first observation the term $-(t-1)/2$ and the constant $1$
inside the integral are annihilated; by~\eqref{pp:eq:eigen-power}
the cosine $-c_u(t)$ becomes $-(1-q(u))^mc_u(t)$. The left-hand
side becomes $((\ppId-\ppSwap{})^mG)(t)=E_m(t)$ by
Definition~\ref{pp:def:correction}. So
\[
 E_m(t)=-\frac1\pi\int_0^\infty\frac{(1-q(u))^m\,c_u(t)}{u^2}\,\mathrm du
 \qquad\text{for every real }t,
\]
and evaluating at $t=s$, where $E_m(s)=\varepsilon_m$,
\begin{equation}\label{pp:eq:residual-integral}
 \varepsilon_m=-\frac1\pi\int_0^\infty
       \frac{\cos((s-1)u)\,(1-q(u))^m}{u^2}\,\mathrm du .
\end{equation}
The integrand is integrable. For large $u$ this is because
$0\leq1-q(u)\leq1$ and $1/u^2$ is integrable at $\infty$. Near
$u=0$, $q$ is a finite sum of cosines, hence smooth, and it is an
even function of $u$ because cosine is even, so $q'(0)=0$ and
$q(u)=q(0)+O(u^2)=1+O(u^2)$; thus $1-q(u)=O(u^2)$ cancels the
$1/u^2$. This is why we passed through the integral
formula~\eqref{pp:eq:hinge}: $\ppSwap{}$ does not act on $G$ in any
simple way, but it acts on each cosine $c_u$ by a scalar factor,
so $m$ rounds of correction produce the single factor
$(1-q(u))^m$ instead of an $m$-fold alternating sum.

\smallskip\noindent\emph{Step 3: a scalar inequality.}\par\nopagebreak
For $0<x\leq1$ put $y=1-x\in[0,1)$. Since $y^k\geq y^m$ for
$1\leq k\leq m$,
\[
 H_my^m=\sum_{k=1}^m\frac{y^m}k\leq\sum_{k=1}^m\frac{y^k}k .
\]
The right-hand side is a truncation of the series
$-\log(1-y)=\sum_{k\geq1}y^k/k$, whose terms are nonnegative, so
\begin{equation}\label{pp:eq:scalar}
 H_my^m\leq\sum_{k=1}^m\frac{y^k}k\leq\sum_{k=1}^\infty\frac{y^k}k
 =-\log(1-y)=-\log x .
\end{equation}

\smallskip\noindent\emph{Step 4: conclude.}\par\nopagebreak
In~\eqref{pp:eq:residual-integral} bound $|\cos((s-1)u)|\leq1$,
then apply~\eqref{pp:eq:scalar} with $x=q(u)$ to get
$(1-q(u))^m\leq-\log q(u)/H_m$, then apply
Lemma~\ref{pp:lem:log}:
\[
 \begin{aligned}
 |\varepsilon_m|
 &\leq\frac1\pi\int_0^\infty\frac{(1-q(u))^m}{u^2}\,\mathrm du
 \leq\frac1\pi\int_0^\infty\frac{-\log q(u)}{H_m\,u^2}\,\mathrm du\\
 &=\frac1{H_m}\cdot\frac1\pi\int_0^\infty\frac{-\log q(u)}{u^2}\,\mathrm du
 \leq\frac1{H_m}\cdot\frac sn .
 \end{aligned}
\]
\end{proof}

\subsection{The mechanism's estimate: replace $\ppFill{},\ppSwap{}$ by $\ppFill{-i},\ppSwap{-i}$}
\label{pp:sec:individual}

Using $\ppFill{}$ and $\ppSwap{}$, agent $i$'s estimate of $g$ after
$m$ rounds is, by Section~\ref{pp:sec:error},
\begin{equation}\label{pp:eq:full-estimate}
 \bigl(\ppFill{}(G+E_1+\cdots+E_{m-1})\bigr)(s_{-i}).
\end{equation}
This is not admissible, because $\ppFill{}$ and $\ppSwap{}$ draw from
all $n$ reports, including $\theta_i$. Let $\ppFill{-i}$ and
$\ppSwap{-i}$ be the fill-in and trade operations on the other
$n-1$ reports,
\begin{equation}\label{pp:eq:own-trade}
 \begin{aligned}
 (\ppFill{-i}f)(t)&=\frac1{n-1}\sum_{j\ne i}f(t+\theta_j),\\
 (\ppSwap{-i}f)(t)&=\frac1{(n-1)^2}\sum_{j\ne i}\sum_{k\ne i}
   f(t-\theta_j+\theta_k).
 \end{aligned}
\end{equation}
Replacing $\ppFill{},\ppSwap{}$ in~\eqref{pp:eq:full-estimate} by
$\ppFill{-i},\ppSwap{-i}$ makes it admissible. With
$E_{k,-i}=(\ppId-\ppSwap{-i})^kG$ for $k\geq1$, the mechanism's
estimate is
\begin{equation}\label{pp:eq:own-estimate}
 \begin{aligned}
 \phi(\theta_{-i})
  &=\bigl(\ppFill{-i}(G+E_{1,-i}+\cdots+E_{m-1,-i})\bigr)(s_{-i})\\
  &=\Bigl(\ppFill{-i}\sum_{k=0}^{m-1}(\ppId-\ppSwap{-i})^kG\Bigr)(s_{-i}).
 \end{aligned}
\end{equation}
Every input to $\phi(\theta_{-i})$ is a report other than
$\theta_i$.

The replacement should cost little: a draw from all $n$ reports
picks agent $i$'s own report with probability $1/n$, so as
$n\to\infty$ it becomes impossible to draw one's own report and
the two estimates coincide. The above is admittedly hand-waving;
it serves as high-level intuition for why, as $n\to\infty$,
replacing $\ppFill{},\ppSwap{}$ by $\ppFill{-i},\ppSwap{-i}$ does
not matter much. Below we formally quantify the change it causes.

Fix an agent $i$. The quantity that will appear is the
\emph{one-draw gap}
\[
 d_i=\frac1n\Bigl(\theta_i+\frac{s_{-i}}{n-1}\Bigr),
\]
which will bound the difference between an admissible draw and an
inadmissible one. For a single agent it can be as large as about
$1/n$, but averaged over all agents, using $\sum_is_{-i}=(n-1)s$,
\begin{equation}\label{pp:eq:di-average}
 \frac1n\sum_{i=1}^nd_i
 =\frac1{n^2}\Bigl(\sum_i\theta_i+\frac1{n-1}\sum_is_{-i}\Bigr)
 =\frac1{n^2}(s+s)=\frac{2s}{n^2},
\end{equation}
and only the average enters the revenue.

\begin{lemma}[One-draw gap]\label{pp:lem:onedraw}
Let $F$ be $L$-Lipschitz. For every $t$,
$|(\ppFill{}F)(t)-(\ppFill{-i}F)(t)|\leq Ld_i$ and
$|(\ppSwap{}F)(t)-(\ppSwap{-i}F)(t)|\leq2Ld_i$.
\end{lemma}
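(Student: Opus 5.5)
We compare the two averages term by term through a coupling: each draw from all $n$ reports is paired with a draw from the other $n-1$ reports, and we bound how far apart the paired values are on average. Since $F$ is $L$-Lipschitz, it is enough to bound the expected distance between the coupled arguments.

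\emph{Fill-in.} Let $J$ be uniform on $\{1,\ldots,n\}$ and $K$ uniform on $\{j\ne i\}$, coupled so that $K=J$ whenever $J\ne i$; when $J=i$, which happens with probability $1/n$, $K$ is drawn uniformly from the others. This is a valid coupling, because conditioning a uniform draw on $J\ne i$ gives the uniform law on the others. Then
\[
 |(\ppFill{}F)(t)-(\ppFill{-i}F)(t)|\le L\,\ppE|\theta_J-\theta_K|=\frac Ln\cdot\frac1{n-1}\sum_{k\ne i}|\theta_i-\theta_k|.
\]
Reports lie in $[0,1]$ and are nonnegative, so $|\theta_i-\theta_k|\le\theta_i+\theta_k$. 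The right-hand side is therefore at most $\frac Ln(\theta_i+s_{-i}/(n-1))=Ld_i$.

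\emph{Trade.} Write $(\ppSwap{}F)(t)=\ppE F(t-\theta_{J}+\theta_{J'})$ with $J,J'$ independent and uniform on $\{1,\ldots,n\}$. Couple each of them separately to $K,K'$ as above, with $K,K'$ independent and uniform on $\{j\ne i\}$; then $\ppE F(t-\theta_K+\theta_{K'})=(\ppSwap{-i}F)(t)$. The difference of the arguments is at most $|\theta_J-\theta_K|+|\theta_{J'}-\theta_{K'}|$. By the fill-in computation, each of these has expectation at most $d_i$, which gives $2Ld_i$.

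An alternative route is telescoping: replace the two draws one at a time, applying the fill-in bound to $u\mapsto F(t-u+\cdot)$ and to the reflected function. Each replacement costs $Ld_i$, since a shift or a reflection of $F$ is still $L$-Lipschitz.

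The main thing to check is the coupling itself: that conditioning on $J\neq i$ gives exactly the uniform law on the other reports. One also needs the bound $|\theta_i-\theta_k|\le\theta_i+\theta_k$, which uses nonnegativity of the reports. Everything else is routine.
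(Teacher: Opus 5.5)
Your proof is correct and is essentially the paper's argument. Your coupling restates probabilistically the paper's identity $\frac1n\sum_{j}v(j)-\frac1{n-1}\sum_{j\ne i}v(j)=\frac1{n(n-1)}\sum_{j\ne i}(v(i)-v(j))$, since $\ppE|\theta_J-\theta_K|$ is exactly $\frac1{n(n-1)}\sum_{j\ne i}|\theta_i-\theta_j|$, and the bound $|\theta_i-\theta_j|\le\theta_i+\theta_j$ from nonnegativity is the paper's step. For the trade, the paper switches one draw at a time (your telescoping alternative), while your independent coupling of both draws gives the same $2Ld_i$ equally validly.
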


\begin{proof}
For any numbers $v(1),\ldots,v(n)$, the average over all $n$ and
the average over the $n-1$ indices other than $i$ differ by
\begin{equation}\label{pp:eq:onedraw}
 \frac1n\sum_{j=1}^nv(j)-\frac1{n-1}\sum_{j\ne i}v(j)
 =\frac1{n(n-1)}\sum_{j\ne i}\bigl(v(i)-v(j)\bigr),
\end{equation}
as one checks by putting both sides over the denominator
$n(n-1)$. Take $v(j)=F(t+\theta_j)$, so that the left-hand side is
$(\ppFill{}F)(t)-(\ppFill{-i}F)(t)$. Since $F$ is $L$-Lipschitz,
$|v(i)-v(j)|\leq L|\theta_i-\theta_j|\leq L(\theta_i+\theta_j)$, so
the right-hand side is at most
\[
 \frac{L}{n(n-1)}\sum_{j\ne i}(\theta_i+\theta_j)
 =\frac{L}{n(n-1)}\bigl((n-1)\theta_i+s_{-i}\bigr)=Ld_i
\]
in absolute value. The trade inequality is proved similarly: a
trade is two draws, so switch the list for one draw at a time,
holding the other fixed, and apply the same bound twice.
\end{proof}

\begin{lemma}[Iterated trades]\label{pp:lem:iterated}
For every $k\geq0$, the functions $\ppSwap{}^kG$ and
$\ppSwap{-i}^kG$ are $1$-Lipschitz, and
\[
 |(\ppSwap{}^kG)(t)-(\ppSwap{-i}^kG)(t)|\leq2kd_i
 \qquad\text{for every }t .
\]
\end{lemma}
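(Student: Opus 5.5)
We argue by induction on $k$, proving the Lipschitz claim and the difference bound together. For the base case $k=0$, both functions equal $G$, which is $1$-Lipschitz since $G(t)=\max\{1-t,0\}$, and the difference is $0=2\cdot0\cdot d_i$.

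For the Lipschitz part, any average of shifts of a $1$-Lipschitz function is $1$-Lipschitz. Both $\ppSwap{}$ and $\ppSwap{-i}$ are such averages: $(\ppSwap{}f)(t)=\ppE f(t+D)$ with a fixed finite distribution of $D$, and $\ppSwap{-i}$ is the analogous average. For $1$-Lipschitz $f$,
\[
 |(\ppSwap{}f)(t)-(\ppSwap{}f)(t')|\leq\ppE|f(t+D)-f(t'+D)|\leq|t-t'|,
\]
and the same holds for $\ppSwap{-i}$. Induction then gives that $\ppSwap{}^kG$ and $\ppSwap{-i}^kG$ are $1$-Lipschitz for every $k$.

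For the difference bound, we use a telescoping step. Suppose the bound holds for $k$ and write
\[
 \ppSwap{}^{k+1}G-\ppSwap{-i}^{k+1}G
 =\bigl(\ppSwap{}-\ppSwap{-i}\bigr)\ppSwap{}^kG
 +\ppSwap{-i}\bigl(\ppSwap{}^kG-\ppSwap{-i}^kG\bigr).
\]
We bound the two terms separately.

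\begin{itemize}
\item The first term is bounded by $2d_i$ at every $t$. This follows from Lemma~\ref{pp:lem:onedraw} applied to $F=\ppSwap{}^kG$, which is $1$-Lipschitz by the first part, so $L=1$.
\item The second term is bounded by $2kd_i$. The operator $\ppSwap{-i}$ is an average, so it is a contraction in the sup norm: $|(\ppSwap{-i}f)(t)|\leq\sup_{t'}|f(t')|$. The induction hypothesis bounds $\sup_t|\ppSwap{}^kG-\ppSwap{-i}^kG|$ by $2kd_i$, uniformly in $t$.
\end{itemize}

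Adding the two bounds gives $2(k+1)d_i$. This completes the induction.

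Nothing here is a real obstacle. The only point needing care is the order in the telescoping. The operator difference $\ppSwap{}-\ppSwap{-i}$ must act on a function already known to be $1$-Lipschitz, which is why we put $\ppSwap{}^kG$ on the right of it. The remaining factor is then handled by the sup-norm contraction of $\ppSwap{-i}$, using that the inductive bound is uniform in $t$. We also note that $d_i$ depends only on the profile and $i$, not on $t$, so the uniform bound is legitimate.
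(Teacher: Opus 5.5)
Your proof is correct and is essentially the paper's induction. The one difference is the intermediate term: you insert $\ppSwap{-i}\ppSwap{}^kG$, while the paper inserts $\ppSwap{}\ppSwap{-i}^kG$. This makes no difference, since both $\ppSwap{}^kG$ and $\ppSwap{-i}^kG$ are $1$-Lipschitz and both operators are averages, so the care you say the order requires is not actually needed.
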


\begin{proof}
An average of shifts of a $1$-Lipschitz function is
$1$-Lipschitz, and $\ppSwap{}^kG$ and $\ppSwap{-i}^kG$ are such
averages of $G$, which gives the first claim. For the second we
induct on $k$. The claim to prove at stage $k$ is that
$\ppSwap{}^kG$ and $\ppSwap{-i}^kG$ differ by at most $2kd_i$ at
every point. For $k=0$ both are $G$ and the difference is $0$.
Assume the claim at stage $k$. Write $F=\ppSwap{}^kG$ and
$F_{-i}=\ppSwap{-i}^kG$, so that $\ppSwap{}^{k+1}G=\ppSwap{}F$ and
$\ppSwap{-i}^{k+1}G=\ppSwap{-i}F_{-i}$. Insert the mixed term
$\ppSwap{}F_{-i}$:
\[
 |\ppSwap{}F-\ppSwap{-i}F_{-i}|
 \leq|\ppSwap{}F-\ppSwap{}F_{-i}|+|\ppSwap{}F_{-i}-\ppSwap{-i}F_{-i}|
\]
pointwise. The first difference is $\ppSwap{}(F-F_{-i})$, an
average of values each at most $2kd_i$ by the stage-$k$ claim, so
it is at most $2kd_i$. The second compares $\ppSwap{}$ and
$\ppSwap{-i}$ on the same function $F_{-i}$, which is
$1$-Lipschitz by the first claim, so it is at most $2d_i$ by
Lemma~\ref{pp:lem:onedraw} with $L=1$. Adding gives $2(k+1)d_i$,
the claim at stage $k+1$.
\end{proof}

\begin{lemma}[Replacement cost]
\label{pp:lem:deletion}
For every $n\geq3$, $\theta\in[0,1]^n$, and $m\geq1$,
\begin{equation}\label{pp:eq:deletion}
 |\Phi(\theta)-g_m|\leq\frac{4m2^m\,s}{n^2},
 \qquad\text{where}\quad\Phi(\theta)=\frac1n\sum_i\phi(\theta_{-i}).
\end{equation}
\end{lemma}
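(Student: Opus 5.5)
Expand both estimates in powers of the trade operations and compare them term by term. By Definition~\ref{pp:def:correction}, $g_m=G_m(s)$ with $G_m=\ppSwap{}\sum_{k=0}^{m-1}(\ppId-\ppSwap{})^kG$. By~\eqref{pp:eq:averaging}, $G_m(s)$ is the average over $i$ of $(\ppFill{}F)(s_{-i})$, where
\[
 F=\sum_{k=0}^{m-1}(\ppId-\ppSwap{})^kG .
\]
Similarly, $\phi(\theta_{-i})=(\ppFill{-i}F_{-i})(s_{-i})$ with $F_{-i}=\sum_{k=0}^{m-1}(\ppId-\ppSwap{-i})^kG$. So it suffices to show, for each agent $i$ and each real $t$, a pointwise bound of the form
\[
 |(\ppFill{}F)(t)-(\ppFill{-i}F_{-i})(t)|\leq 2m2^m d_i .
\]
Averaging over $i$ then gives the lemma, since $\frac1n\sum_id_i=2s/n^2$ by~\eqref{pp:eq:di-average}.

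To get the pointwise bound, expand binomially:
\[
 F=\sum_{k=0}^{m-1}\sum_{j=0}^{k}\binom kj(-1)^j\ppSwap{}^jG ,
\]
and likewise for $F_{-i}$ with $\ppSwap{-i}$. Then $\ppFill{}F-\ppFill{-i}F_{-i}$ is a signed combination, with coefficients $\binom kj$, of the differences
\[
 \ppFill{}\ppSwap{}^jG-\ppFill{-i}\ppSwap{-i}^jG .
\]
Bound each difference by inserting the mixed term $\ppFill{}\ppSwap{-i}^jG$. First, $\ppFill{}$ applied to $\ppSwap{}^jG-\ppSwap{-i}^jG$ is an average of values each at most $2jd_i$ by Lemma~\ref{pp:lem:iterated}. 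Second, $\ppFill{}$ and $\ppFill{-i}$ applied to the same function $\ppSwap{-i}^jG$, which is $1$-Lipschitz by Lemma~\ref{pp:lem:iterated}, differ by at most $d_i$ by Lemma~\ref{pp:lem:onedraw}. Each difference is therefore at most $(2j+1)d_i$.

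Summing with the triangle inequality gives
\[
 \sum_{k=0}^{m-1}\sum_{j=0}^k\binom kj(2j+1)d_i .
\]
Use $\sum_j\binom kj=2^k$ and $\sum_j j\binom kj=k2^{k-1}$ to get $\sum_k(k+1)2^kd_i$. Since $k+1\le m$ and $\sum_{k<m}2^k<2^m$, this is below $m2^md_i$. Averaging over $i$ then gives at most $2m2^ms/n^2$, within the stated $4m2^ms/n^2$.

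The main obstacle is bookkeeping rather than any real difficulty: matching $g_m$ to the average of full fill-ins evaluated at $s_{-i}$, and not losing the $1$-Lipschitz property across the signed binomial expansion. That is why I apply the Lipschitz bounds only to the individual powers $\ppSwap{}^jG$ and $\ppSwap{-i}^jG$, never to the combinations $(\ppId-\ppSwap{})^kG$, whose Lipschitz constants could grow like $2^k$. The factor $2^m$ in the statement suggests the author used exactly this crude triangle-inequality expansion.
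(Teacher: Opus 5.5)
Your proof is correct and follows the paper's route: you reduce to a per-agent pointwise bound at $s_{-i}$, expand in powers of the trade operations, combine Lemma~\ref{pp:lem:iterated} and Lemma~\ref{pp:lem:onedraw} through a mixed term, and average with \eqref{pp:eq:di-average}. The only difference is bookkeeping. The paper inserts the single mixed term $\ppFill{}F_{-i}$, using that $F_{-i}$ is $2^m$-Lipschitz, and bounds $\sum_k|c_k|\,2k$ crudely by $2^m\cdot2(m-1)$. You instead insert a mixed term for each power and evaluate the binomial sums exactly, which is why you obtain $2m2^ms/n^2$, a factor of two inside the stated bound.
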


\begin{proof}
Let $F=G+E_1+\cdots+E_{m-1}$ and
$F_{-i}=G+E_{1,-i}+\cdots+E_{m-1,-i}$, so that
$\phi(\theta_{-i})=(\ppFill{-i}F_{-i})(s_{-i})$
by~\eqref{pp:eq:own-estimate}, while by
Definition~\ref{pp:def:correction} and~\eqref{pp:eq:averaging},
$g_m=G_m(s)=(\ppSwap{}F)(s)=\frac1n\sum_i(\ppFill{}F)(s_{-i})$.
Therefore
\[
 |\Phi(\theta)-g_m|
 \leq\frac1n\sum_{i=1}^n
   \bigl|(\ppFill{-i}F_{-i})(s_{-i})-(\ppFill{}F)(s_{-i})\bigr| .
\]
We bound the $i$th term by $2m2^md_i$; averaging over $i$ with
\eqref{pp:eq:di-average} then gives the lemma.

Multiplying out the powers in $E_k=(\ppId-\ppSwap{})^kG$,
$F=\sum_{k=0}^{m-1}c_k\ppSwap{}^kG$ for integer coefficients
$c_k$, and $F_{-i}=\sum_{k=0}^{m-1}c_k\ppSwap{-i}^kG$ with the same
coefficients. The absolute values of the coefficients of
$(\ppId-\ppSwap{})^k$ sum to $2^k$, so
$\sum_k|c_k|\leq\sum_{k<m}2^k<2^m$. Two consequences: by
Lemma~\ref{pp:lem:iterated}, at every point
\[
 |F-F_{-i}|\leq\sum_k|c_k|\,2kd_i\leq2^m\cdot2(m-1)d_i ,
\]
and $F_{-i}$, a combination of $1$-Lipschitz functions with
coefficients of total absolute value below $2^m$, is
$2^m$-Lipschitz. Now insert the mixed term $\ppFill{}F_{-i}$:
\[
 \begin{aligned}
 |\ppFill{}F-\ppFill{-i}F_{-i}|
 &\leq|\ppFill{}F-\ppFill{}F_{-i}|+|\ppFill{}F_{-i}-\ppFill{-i}F_{-i}|\\
 &\leq2^m\cdot2(m-1)d_i+2^md_i\leq2m2^md_i .
 \end{aligned}
\]
The first term is $\ppFill{}(F-F_{-i})$, an average of values
bounded by the display above. The second is
Lemma~\ref{pp:lem:onedraw} with $L=2^m$, applied to the
$2^m$-Lipschitz function $F_{-i}$.
\end{proof}

Since $g_m=g-\varepsilon_m$, Lemmas~\ref{pp:lem:approximation}
and~\ref{pp:lem:deletion} give, at every profile,
\begin{equation}\label{pp:eq:aggregate}
 |\Phi(\theta)-g|
 \leq|\varepsilon_m|+|\Phi(\theta)-g_m|
 \leq b_{n,m}\,\frac sn,
 \qquad
 b_{n,m}=\frac1{H_m}+\frac{4m2^m}n .
\end{equation}
We call $b_{n,m}$ the \emph{error allowance}; it depends only on
$n$ and $m$ and is therefore known before the reports arrive.

\section{The Mechanism and Its Guarantee}\label{pp:sec:mechanism}

\begin{definition}[Mechanism $M_{n,m}$]\label{pp:def:mechanism}
Fix integers $n\geq3$ and $m\geq1$ before the reports arrive.
Build the project if and only if $s\geq1$, and use the
payments~\eqref{pp:eq:payments} with
\begin{equation}\label{pp:eq:mechanism}
 h(\theta_{-i})
   =s_{-i}+\frac{n-1}{n}\,\phi(\theta_{-i})+\frac{b_{n,m}}ns_{-i}.
\end{equation}
\end{definition}

The first term is the linear part of the ideal Groves term of
Section~\ref{pp:sec:idea}, the second is the mechanism's estimate
of its shortfall term $g$, and the third covers the possible
deficit, as follows. Without the third term, the sum of the Groves
terms over agents would be $(n-1)s+(n-1)\Phi(\theta)$, and
by~\eqref{pp:eq:welfare} the retained revenue would be
$(n-1)[\Phi(\theta)-g]$, which can be negative; by
\eqref{pp:eq:aggregate} the deficit is at most $(n-1)b_{n,m}s/n$.
The third term covers exactly this amount: it depends only on
$\theta_{-i}$, so it is admissible, and summed over agents,
\[
 \sum_i\frac{b_{n,m}}ns_{-i}=\frac{n-1}nb_{n,m}s .
\]

\begin{theorem}[No deficit and bounded welfare loss]\label{pp:thm:main}
For every $n\geq3$ and $m\geq1$, the mechanism $M_{n,m}$ is
deterministic, anonymous, efficient, dominant-strategy incentive
compatible, and ex post weakly budget balanced. At every
profile $\theta\in[0,1]^n$,
\begin{equation}\label{pp:eq:mainbound}
 0\leq L(\theta)\leq2\,b_{n,m}\,S(\theta),
\end{equation}
and consequently $\alpha(M_{n,m})\geq1-2b_{n,m}$.
\end{theorem}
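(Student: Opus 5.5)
The plan is to dispose of the qualitative properties first and then compute $L(\theta)$ exactly. The mechanism is deterministic because $\phi$ is a finite average. It is anonymous because $\phi(\theta_{-i})$ and $s_{-i}$ are symmetric functions of the other reports. It is efficient because it builds iff $s\geq1$. It is dominant-strategy incentive compatible by the Groves argument of Section~\ref{pp:sec:model}, since $h$ depends only on $\theta_{-i}$: every input to $\phi(\theta_{-i})$ is a report other than $\theta_i$, as noted after~\eqref{pp:eq:own-estimate}. So everything reduces to the two-sided revenue bound~\eqref{pp:eq:mainbound}, and the ratio statement follows from $\alpha(M)=1-\sup L/S$.

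For the revenue, sum~\eqref{pp:eq:mechanism} over $i$. Using $\sum_is_{-i}=(n-1)s$ and $\sum_i\phi(\theta_{-i})=n\Phi(\theta)$, we get
\[
\sum_ih(\theta_{-i})=(n-1)s+(n-1)\Phi(\theta)+\frac{n-1}{n}b_{n,m}s .
\]
Subtracting $(n-1)S(\theta)=(n-1)(s+g)$ as in~\eqref{pp:eq:welfare} gives
\[
L(\theta)=(n-1)\Bigl[\Phi(\theta)-g+b_{n,m}\frac sn\Bigr].
\]
By the aggregate bound~\eqref{pp:eq:aggregate}, the bracket lies in $[0,2b_{n,m}s/n]$. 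Hence
\[
0\leq L(\theta)\leq 2b_{n,m}\frac{n-1}{n}s\leq 2b_{n,m}s\leq 2b_{n,m}S(\theta),
\]
the last step because $S(\theta)=\max\{s,1\}\geq s$.

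There is no real obstacle here, since the analytic work sits in Lemmas~\ref{pp:lem:approximation} and~\ref{pp:lem:deletion}, already combined in~\eqref{pp:eq:aggregate}. The only points needing care are the following:
\begin{itemize}
\item The identity $\sum_ih(\theta_{-i})-(n-1)S=L$ must hold in both the built and not-built cases; the paper's~\eqref{pp:eq:welfare} already covers both.
\item The surcharge $b_{n,m}s_{-i}/n$ must be admissible, which it is, and must sum exactly to $(n-1)b_{n,m}s/n$. The factor $(n-1)/n$ then matches the factor multiplying the worst-case deficit, so non-deficit holds exactly rather than only approximately.
\end{itemize}
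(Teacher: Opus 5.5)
Your proposal is correct and follows essentially the same route as the paper's proof. Like the paper, you sum the Groves terms to get $L(\theta)=(n-1)\bigl[\Phi(\theta)-g\bigr]+\frac{n-1}{n}b_{n,m}s$, use~\eqref{pp:eq:aggregate} to place $L(\theta)$ between $0$ and $\frac{2(n-1)}{n}b_{n,m}s\le 2b_{n,m}S(\theta)$, and obtain the qualitative properties from the Groves form and the admissibility of every term in $h$.
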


\begin{proof}
The decision is efficient and the payments have Groves form, so
the argument of Section~\ref{pp:sec:model} gives
dominant-strategy incentive compatibility; the other listed
properties are immediate from the definition.
Summing~\eqref{pp:eq:mechanism} over agents and
using~\eqref{pp:eq:welfare},
\begin{equation}\label{pp:eq:budget}
 L(\theta)=(n-1)[\Phi(\theta)-g]+\frac{n-1}nb_{n,m}s .
\end{equation}
By~\eqref{pp:eq:aggregate} the first term is at most the second
in absolute value, so
$0\leq L(\theta)\leq\frac{2(n-1)}nb_{n,m}s\leq2b_{n,m}S(\theta)$.
The left inequality is weak budget balance, and
$W(\theta)/S(\theta)=1-L(\theta)/S(\theta)$ gives the ratio.
\end{proof}

\subsection{Choosing the number of rounds}

The allowance $b_{n,m}$ in~\eqref{pp:eq:aggregate} exhibits a
trade-off. More rounds shrink the harmonic-number term $1/H_m$
but inflate the replacement cost $4m2^m/n$, which grows
exponentially in $m$. A larger population makes the replacement
cost smaller and so permits more rounds. We take
\[
 m_n=\max\{1,\lfloor\tfrac12\log_2n\rfloor\},
 \qquad M_n=M_{n,m_n},
\]
so that $2^{m_n}\leq\sqrt n$ for $n\geq4$ and both terms vanish
as $n\to\infty$.

\begin{proof}[Proof of Theorem~\ref{pp:thm:asymptotic}]
For $n\geq4$, $m_n\leq\frac12\log_2n$ and $2^{m_n}\leq\sqrt n$, so
the replacement cost satisfies
$4m_n2^{m_n}/n\leq4m_n/\sqrt n=O(\log n/\sqrt n)$.
Integral comparison gives $\log(m+1)\leq H_m\leq1+\log m$, and
$m_n$ grows proportionally to $\log n$, so
$1/H_{m_n}=\Theta(1/\log\log n)$, which dominates. Hence
\[
 b_{n,m_n}=O\!\left(\frac1{\log\log n}\right)\longrightarrow0,
\]
and Theorem~\ref{pp:thm:main} gives
$1-\alpha(M_n)\leq2b_{n,m_n}=O(1/\log\log n)$. In particular
$\alpha(M_n)\to1$ as $n\to\infty$.
\end{proof}

\section{Discussion}\label{pp:sec:discussion}

\begin{remark}[Why bounded precision is not needed]\label{pp:rem:precision}
Section~\ref{pp:sec:mainresult} explained what the bounded
precision assumption buys the mechanism of~\cite{pp-guo2019}: on
the profiles that matter, almost every agent reports zero and so
knows the total exactly. The present mechanism never uses such a
fact. Its estimate is built from averages of the shortfall function
$G$ over the reports, every bound in
Lemmas~\ref{pp:lem:approximation} and~\ref{pp:lem:deletion} is a
multiple of the mean report $s/n$, and the surcharge is a multiple
of $s_{-i}$. Shrinking all reports by a common factor shrinks every
error bound and the surcharge by the same factor, so a profile of many
tiny positive reports, which breaks the earlier mechanism, is no
harder than a profile of a few large ones. No arithmetic property
of the reports enters at any point.
\end{remark}

\begin{remark}[Where the rate comes from]\label{pp:rem:rate}
The allowance $b_{n,m}=1/H_m+4m2^m/n$ has two terms, one from each
of the two lemmas. The first is the residual error of the
$m$-round analysis, Lemma~\ref{pp:lem:approximation}; it is only
logarithmic in $m$. The second is the price of replacing the
full-list operations by the admissible ones,
Lemma~\ref{pp:lem:deletion}; the factor $2^m$ comes from bounding
each of the $m$ error functions separately and crudely. The two
terms balance at $m$ of order $\log n$, which gives
$1/\log\log n$. Improving the replacement cost from $m2^m$ to a
polynomial in $m$ would allow $m$ of order a power of $n$ and
improve the rate to $1/\log n$; a faster-than-logarithmic decay of
the residual error would improve it further. We do not know
whether either bound is tight for this construction.
\end{remark}

\begin{remark}[Evaluation cost, and a randomised version]\label{pp:rem:cost}
Expanding $\phi(\theta_{-i})$ into averages of $G$, the term with
$k$ trades ranges over $(n-1)^{2k+1}$ ordered tuples per agent.
Enumerating them while reusing partial sums takes
$O(nm(n-1)^{2m-1})$ arithmetic operations for all agents, which is
polynomial in $n$ for fixed $m$ and $n^{O(\log n)}$ for $m=m_n$.
The construction therefore does not provide a deterministic
polynomial-time evaluation algorithm. It does provide a randomised
one: $\phi(\theta_{-i})$ is a signed combination of averages of
$G$ over tuples drawn uniformly from $\theta_{-i}$, so drawing a
polynomial number of tuples and averaging gives an unbiased
estimate of $\phi(\theta_{-i})$. The resulting payments depend only
on $\theta_{-i}$ and on the mechanism's own randomness, so
truthfulness in dominant strategies holds for every realisation,
while the non-deficit and welfare guarantees of
Theorem~\ref{pp:thm:main} hold in expectation over the mechanism's
randomness rather than ex post.
\end{remark}

\begin{remark}[Beyond the public project problem]\label{pp:rem:beyond}
Of the shortfall function $G$, the replacement-cost lemmas use only
that it is $1$-Lipschitz, and the harmonic bound uses only its
hinge shape. A hinge at any point $c$ has the same trigonometric
representation as~\eqref{pp:eq:hinge} with $t-1$ replaced by $t-c$,
and the trade operation treats it in the same way, so the whole
argument applies verbatim to any piecewise-linear function of the
total, with the constants multiplied by the total variation of its
slope. Whether the scheme helps in other redistribution problems, where
the ideal Groves term is a function of a total the agent cannot
see, we have not pursued.
\end{remark}

\section{Conclusion}\label{pp:sec:conclusion}

We have given a deterministic, anonymous, strategy-proof, and
non-deficit mechanism for the public project problem whose
worst-case efficiency ratio tends to $1$ as the number of agents
grows, for arbitrary real types in $[0,1]$. This resolves the
asymptotic question raised by Naroditskiy et
al.~\cite{pp-naroditskiy2012}, without the bounded precision
assumption under which it was previously
answered~\cite{pp-guo2019}. The mechanism itself is a finite
averaging scheme: fill in the missing report, estimate the error,
estimate the error of the error, and repeat; the analysis reduces
to one bound on an iterated averaging operator, which we prove by
representing the shortfall function as a trigonometric integral
and applying the Poisson--Jensen inequality for bounded analytic
functions.

With the asymptotic question settled, the open problem is the one
for specific numbers of agents. Exact worst-case optimal mechanisms
are known only for $n\leq5$~\cite{pp-naroditskiy2012,pp-guo2023lottery},
where they match the conjectured upper bounds
of~\cite{pp-naroditskiy2012}; for larger $n$ it is not known
whether those bounds are tight, and the present mechanism, whose
guarantee is informative only for populations in the thousands,
says nothing about them. The rate $1-O(1/\log\log n)$ is what this
construction delivers, and we make no claim that it is optimal.
Whether the mechanism's estimate, which direct enumeration
evaluates in $n^{O(\log n)}$ operations, admits a deterministic
polynomial-time computation is also open; a randomised one is
described in Section~\ref{pp:sec:discussion}.

\paragraph{Disclosure of AI assistance.}
AI assistants contributed to the methodology. The author posed the
problem and the target theorem. Claude Code (Claude Fable~5.1) did
not find a proof; OpenAI Codex (GPT-6 Astra) found the
iterated-correction construction and a first proof, which was not
human-readable. An earlier successful proof of
Lemma~\ref{pp:lem:approximation} proceeded by analysing grid-valued
reports and extending the result to arbitrary real reports by
continuity. The argument here
is a rewrite under the author's direction over more than twenty
passes: the operator presentation, the aggregate analysis with the
full list of reports and a Lipschitz correction at the end, the
division into lemmas, and the budget repair are the author's design;
the proof of Lemma~\ref{pp:lem:approximation} through the
Poisson--Jensen inequality is essentially the assistants', selected
from several candidates and rewritten by the author. The author
checked every line; no numerical mechanism search was used at any
stage. The work spanned many sessions, so no single prompt exists.

\end{document}